\theoremstyle{plain}
\newtheorem{theorem}{Theorem}
\newtheorem{lemma}[theorem]{Lemma}
\newtheorem{proposition}[theorem]{Proposition}
\theoremstyle{definition}
\newtheorem{definition}[theorem]{Definition}
\newtheorem{example}[theorem]{Example}
\newtheorem{remark}[theorem]{Remark}
\DeclareMathOperator{\wt}{wt}
\title{Efficient Description of some Classes of Codes using Group Algebras}
\author{Henry Chimal-Dzul}
\author{Niklas Gassner}
\author{Joachim Rosenthal}
\author{Reto Schnyder}
\affil{Institute of Mathematics, University of Zurich, CH-8057 Z\"{u}rich, Switzerland\\ {(email:\{\texttt{henry.chimal-dzul, niklas.gassner, rosenthal, reto.schnyder\}@math.uzh.ch})\thanks{Henry Chimal-Dzul acknowledges the support of Swiss Confederation under a Government Excellence Fellowship (ESKAS-Nr 2021.0139). The second and third author are supported by armasuisse Science and Technology (Project Nr.: CYD C-2020010). All authors were supported in part by the Swiss National Science Foundation grant 188430.}}}
\date{}
\begin{document}

\maketitle

\begin{abstract}                
Circulant matrices are an important tool widely used in coding theory and cryptography. A circulant matrix is a square matrix whose rows are the cyclic shifts of the first row. Such a matrix can be efficiently stored in memory because it is fully specified by its first row. The ring of $n \times n$ circulant matrices can be identified with the quotient ring $\mathbb{F}[x]/(x^n-1)$. In consequence, the strong algebraic structure of the ring $\mathbb{F}[x]/(x^n-1)$ can be used to study properties of the collection of all $n\times n$ circulant matrices. The ring $\mathbb{F}[x]/(x^n-1)$ is a special case of a group algebra and elements of any finite dimensional group algebra can be represented with square matrices which are specified by a single column.  In this paper we study this representation and prove that it is an injective Hamming weight preserving homomorphism of $\mathbb{F}$-algebras and classify it in the case where the underlying group is abelian.

Our work is motivated by the desire to generalize the BIKE cryptosystem (a contender in the NIST competition to get a new post-quantum standard for asymmetric cryptography). Group algebras can be used to design similar cryptosystems or, more generally, to construct low density or moderate density parity-check matrices for linear codes.
\end{abstract}

{\small \textbf{Keywords:} Coding Theory, Linear Codes, MDPC codes, Circulant matrices, group algebras}

\section{Introduction}

In coding theory and cryptography, one often works with large matrices. Large matrices require a considerable amount of storage, so it is natural to look for families of matrices which can be efficiently stored. 

A very commonly used family of such matrices is the ring of circulant matrices. Since all rows of a circulant matrix are cyclic shifts of the first row, one only needs to store the first row instead of the whole matrix. This property makes them attractive for cryptographic applications \cite[]{baldi2007quasi, berger2009}. For example, circulant matrices are used in the cryptosystems BIKE \citep{NISTBike} and HQC \citep{NISTHQC}, which are two contenders for the fourth round of the NIST Post-Quantum Cryptography Standarization. For an overview of post-quantum cryptography, see \cite[Section 3.4 and Section 7.1] {weger2022survey}. 

On a more algebraic side, circulant matrices over a field $k$ form a $k$-algebra and can be described with the ring $k[x]/(x^n -1)$. Indeed, by identifying an element  
$f=a_0+a_1x+\cdots +a_{n-1}x^{n-1}\in k[x]/(x^n -1)$ with the $n\times n$ circulant matrix
\[
C(f)=
\begin{pmatrix}
a_0 & a_{n-1} & a_{n-2} & \cdots & a_{1}\\
a_1 & a_0 & a_{n-1} & \cdots & a_{2}\\
a_{2} & a_{1} & a_{0} & \cdots & a_{3}\\
\vdots & \vdots & \vdots & \ddots & \vdots\\
a_{n-1} & a_{n-2} & a_{n-3} & \cdots & a_{0}
\end{pmatrix}\in\mathrm{Mat}_{n}(k),
\]
we get a $k$-algebra isomorphism from $k[x]/(x^n -1)$ to the $k$-algebra of circulant matrices. Note also that $k[x]/(x^n -1)$ is the group algebra $k[C_n]$ of the cyclic group $C_n$ with $n$ elements.  More generally, \cite[]{santini2021reproducible} study matrices whose rows are related through an arbitrary family of permutations rather than just cyclic shifts, and under which conditions these matrices form a ring. 

Group algebras themselves have a history in coding theory. Their first general usage in coding theory was in \cite[]{berman} and independently by \cite[]{groupalgebrawilliams}. Both studied group codes as ideals in a group algebra. For a more recent treatment of group codes, we refer to \cite[Chapter 16]{Concise_Ency_Coding_2021}. For a purely mathematical study of group algebras, we refer to \cite[]{Group_Ring_Groups_1,Group_Ring_Groups_2} and \cite[]{Milies_Sehgal}.

Now, observe that the matrix $C(f)$ can be considered as the transformation matrix of the linear transformation $k[x]/(x^n -1)\rightarrow k[x]/(x^n -1)$ given by $h\mapsto fh$ with respect to the ordered basis $\mathcal{B}=\{1,x,x^2,\ldots,x^{n-1}\}$ of $k[x]/(x^n -1)$. Elements of any finite-dimensional group algebra $k[G]$ can be represented with square matrices in a similar way.

The aim of this paper is to study a representation of a group algebra $k[G]$ obtained by mapping its elements to transformation matrices of linear maps from $k[G]$ to itself. We prove that this representation is a Hamming weight preserving monomorphism of $k$-algebras and show that an element of a finite-dimensional group algebra is invertible if and only if the corresponding square matrix is invertible. We will give a necessary condition for elements of a group algebra to be invertible over $\mathbb{F}_2$ and classify matrix representations of $k[G]$ when $G$ is a finite abelian group. Similarly to the paper by \cite{groupalgebrawilliams}, we will use the fact that a finite abelian group is the product of cyclic groups to do so.  

This paper is organized as follows. Section 2 contains some basic definitions and notations concerning group algebras and a way to represent them with square matrices. In Section 3, we study basic properties of this representation, the most important result of this section is that this matrix representation is weight-preserving. Section 4 is the highlight of this paper, as we give a classification of matrices obtained from our method when working with abelian groups. Finally, in Section 5, we will discuss some possible applications in cryptography and coding theory. To the best of the authors' knowledge, Sections 4 and 5 are entirely new contributions.

\section{Group Algebras and their Matrix Representations}

Let $k$ denote a field. A $k$-algebra (or algebra over $k$) is an associative ring $(A,+,\cdot)$ with multiplicative identity $1_A$ such that
\begin{enumerate}
    \item $(A,+)$ is a $k$-vector space;
    \item $\lambda(ab)=(\lambda a)b=a(\lambda b)$ for all $\lambda\in k$ and $a,b\in A$.
\end{enumerate}
Because an algebra over $k$ is simultaneously a $k$-vector space and a ring, various results and notions of linear algebra and ring theory are inherited in this setting. In particular, a $k$-algebra $A$ that is finite-dimensional as a vector space over $k$ is called a finite-dimensional algebra, while $A$ is said to be commutative provided that the ring $A$ is commutative. Given two algebras $A$ and $B$ over $k$, an algebra homomorphism is a map $\phi:A\rightarrow B$ that satisfies $\phi(1_A)=1_B$ and is a $k$-linear ring homomorphism.


One of the most important examples of $k$-algebras is the group algebra $k[G]$ formed from a field $k$ and a group $G$ (written multiplicatively). As a set, $k[G]$ is the collection of all finite formal sums\footnote{We allow the possibility that some of the $\lambda_i$ are zero but no $g_i$ may be repeated, so that an element of $k[G]$ may be written in formally different ways.}
\[
\sum_{i=1}^n\lambda_ig_i = \lambda_1g_1 + \lambda_2g_2 + \cdots+ \lambda_ng_n,
\]
where $n\in\mathbb{N}$, $\lambda_i\in k$ and $g_i\in G$ for all $1\leq i\leq n$.  Addition\footnote{We may assume that two formal sums involve the same group elements $g_1,\ldots, g_n$ by inserting zeros if necessary.} and multiplication, as well as multiplication by scalars $\mu\in k$, are defined in the obvious ways:
\begin{gather*}
\sum_{i=1}^n\lambda_ig_i + \sum_{i=1}^n\mu_{i}g_i := \sum_{i=1}^n(\lambda_i+\mu_i)g_i,\\
\left(\sum_{i=1}^n\lambda_ig_i\right) \left(\sum_{j=1}^m\mu_jh_j\right) := \sum_{i=1}^n\sum_{j=1}^{m}(\lambda_i\mu_j)g_ih_j,\\
\mu \sum_{i=1}^n\lambda_ig_i := \sum_{i=1}^n\mu\lambda_ig_i, \mu\in k.
\end{gather*}
With these operations, it is easy to verify that $k[G]$ is indeed a $k$-algebra. The identity of the ring $k[G]$ is $1e_G$ and $k[G]$ is commutative if and only if $G$ is an abelian group. We can identify the set $G$ with the set $\{1g : g\in G\}$, so that $G$ is a subset of $k[G]$. Thus, as a $k$-vector space, the set $G$ is a basis of $k[G]$. As a result, $k[G]$ is a finite-dimensional $k$-algebra if and only if $G$ is a finite group.


A classical result for group algebras is that every group homomorphism induces a $k$-algebra homomorphism between the respective group algebras. To be precise, we have the following result.

\begin{proposition}{\cite[Corollary 3.2.8]{Milies_Sehgal}}\label{grouphom}
Let $k$ be a field and $G$ and $H$ groups. Then a group homomorphism $\psi: G \rightarrow H$ induces an algebra homomorphism $\tilde{\psi}: k[G] \rightarrow k[H]$ given by
\[
\sum_{i=1}^n \lambda_i g_i \mapsto \sum_{i=1}^n \lambda_i \psi(g_i).
\]
Moreover, $\tilde{\psi}$ is injective (respectively surjective) if and only if $\psi$ is injective (respectively surjective).
\end{proposition}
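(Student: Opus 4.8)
The plan is to prove the statement in two stages: first establish that $\tilde{\psi}$ is a well-defined $k$-algebra homomorphism, then characterize when it is injective or surjective in terms of the corresponding property of $\psi$. Since this is stated as a corollary to results in Milies--Sehgal, I expect the universal property of group algebras to do most of the heavy lifting, but let me reconstruct the argument from scratch.

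For the \emph{well-definedness} and homomorphism claims, I would first check that $\tilde{\psi}$ respects the formal-sum representation (different formal expressions of the same element map to the same value), then verify $k$-linearity, which is immediate from the defining formula since $\tilde{\psi}$ acts coordinatewise on the basis $G$. The multiplicative property $\tilde{\psi}(ab) = \tilde{\psi}(a)\tilde{\psi}(b)$ reduces, by linearity, to checking it on basis elements $g_i, g_j \in G$: there one computes $\tilde{\psi}(g_i g_j) = \psi(g_i g_j) = \psi(g_i)\psi(g_j) = \tilde{\psi}(g_i)\tilde{\psi}(g_j)$, using precisely that $\psi$ is a group homomorphism. Finally $\tilde{\psi}(1_{k[G]}) = \tilde{\psi}(1 \cdot e_G) = 1 \cdot \psi(e_G) = 1 \cdot e_H = 1_{k[H]}$.

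For the \emph{injectivity} equivalence, the backward direction (if $\psi$ injective then $\tilde{\psi}$ injective) is the interesting one. Suppose $\psi$ is injective and take $\sum_{i=1}^n \lambda_i g_i$ with distinct $g_i$ in the kernel of $\tilde{\psi}$. Then $\sum_i \lambda_i \psi(g_i) = 0$ in $k[H]$; since $\psi$ is injective the images $\psi(g_i)$ are distinct elements of $H$, hence linearly independent as basis elements of $k[H]$, forcing every $\lambda_i = 0$. The forward direction is the contrapositive: if $\psi$ is not injective, pick $g \neq g'$ with $\psi(g) = \psi(g')$, so that $\tilde{\psi}(g - g') = \psi(g) - \psi(g') = 0$ while $g - g' \neq 0$ in $k[G]$. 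For the \emph{surjectivity} equivalence: if $\psi$ is surjective then every basis element $h \in H$ has a preimage, so the image of $\tilde{\psi}$ contains all basis elements of $k[H]$ and hence, being a $k$-subspace, equals $k[H]$; conversely, if $\psi$ is not surjective then some $h \in H$ lies outside the image, and since $\tilde{\psi}$ sends the basis $G$ into $\psi(G) \subseteq H \setminus \{h\}$, the basis element $h$ cannot lie in the image $\tilde{\psi}(k[G]) = \mathrm{span}_k(\psi(G))$.

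The main subtlety I would flag is the well-definedness step together with the footnoted convention that elements of $k[G]$ may be written in formally distinct ways (with zero coefficients or reordered terms). One must confirm that $\tilde{\psi}$ does not depend on the chosen representation, and the cleanest way to do this is to regard $k[G]$ intrinsically as the free $k$-module on the set $G$, so that the formula defines $\tilde{\psi}$ on a basis and extends uniquely by linearity; everything else is then a formal verification. No deep obstacle arises here, but the linear-independence of the $\psi(g_i)$ in the injectivity argument is exactly where injectivity of $\psi$ is used and deserves explicit mention.
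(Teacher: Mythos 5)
Your proof is correct and complete: the free-module/universal-property argument for well-definedness, the reduction of multiplicativity to basis elements, and the linear-independence arguments for both the injectivity and surjectivity equivalences are all sound. Note, however, that the paper itself offers no proof of this proposition --- it is quoted verbatim as Corollary 3.2.8 of Milies--Sehgal --- so there is nothing to compare against beyond the standard textbook argument, which is exactly what you have reconstructed.
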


\subsection{A Matrix Representation of Finite-Dimensional Group Algebras}

Recall that a matrix representation of degree $m$ of a finite-dimensional algebra $A$ over $k$ is an algebra homomorphism from $A$ into the $k$-algebra $\mathrm{Mat}_{m}(k)$. A matrix representation is said to be faithful if it is injective. In this section we will explicitly construct a faithful matrix representation of a finite-dimensional group algebra. To this aim and from now on, $G$ denotes an arbitrary finite group of order $n$, denoted multiplicatively, and $k$ a field.

Let $\mathrm{End}(k[G])$ denote the set of all algebra homomorphisms from $k[G]$ into itself. If we define addition of two algebra homomorphisms $\alpha,\beta\in \mathrm{End}(k[G])$ by the rule
\[
(\alpha+\beta)(f)=\alpha(f) + \beta(f), \qquad f\in k[G],
\]
then $\mathrm{End}(k[G])$ becomes an additive abelian group. This group becomes a ring with identity if we define multiplication of homomorphisms by composition. Moreover, $\mathrm{End}(k[G])$ is an algebra over $k$ when multiplication by scalars $\lambda \in k$ is set to be
\[
(\lambda \alpha)(f) = \lambda \alpha(f), \qquad f\in k[G].
\]

On the other hand, every element in $\mathrm{End}(k[G])$ is in particular a $k$-linear transformation. Thus, given an ordered basis $\mathcal{B}$ of $k[G]$, from linear algebra we know that to every $\alpha \in \mathrm{End}(k[G])$ we can associate the transformation matrix $M_{\mathcal{B}}(\alpha)\in \mathrm{Mat}_{n}(k)$, so that the action of $\alpha$ is completely described by $M_{\mathcal{B}}(\alpha)$. This association satisfies the following properties for all $\lambda\in k$ and $\alpha,\beta\in \mathrm{End}(k[G])$ \citep[Theorem 2.15]{Roman}:
\begin{enumerate}
    \item $M_\mathcal{B}(\lambda \alpha+\beta) = \lambda M_{\mathcal{B}}(\alpha) + M_{\mathcal{B}}(\beta)$;
    \item $M_\mathcal{B}(\alpha \beta) = M_\mathcal{B}(\alpha) M_\mathcal{B}(\beta)$;
    \item $M_\mathcal{B}(1) = I_n$. 
\end{enumerate}
In the language of algebras, the previous association defines an $k$-algebra homomorphism $\theta: \mathrm{End}(k[G]) \to \mathrm{Mat}_n(k)$ given by
\begin{equation}\label{eq:theta}
\alpha \mapsto M_{\mathcal{B}}(\alpha). 
\end{equation}
Moreover, $\theta$ is an isomorphism of algebras and so $\alpha\in \mathrm{End}(k[G])$ is invertible if and only if $M_{\mathcal{B}}(\alpha)$ is an invertible matrix.

In light of the above, any $k$-algebra homormophism $\varphi:k[G]\rightarrow \mathrm{End}(k[G])$ would give a matrix representation of degree $n$ through the following diagram:
\[
\begin{tikzcd}
k[G] \arrow[r,"\varphi"] \arrow[rd,dashed] &  \mathrm{End}(k[G])\arrow[d,"\theta"]\\
 & \mathrm{Mat}_n(k)
\end{tikzcd}
\]
We now then focus on defining an algebra homomorphism $\varphi:k[G]\rightarrow \mathrm{End}(k[G])$. To this end, we have the following lemma whose proof follows directly from the definitions.

\begin{lemma}\label{lemma3}
 Let $f\in k[G]$ and $\phi(f):k[G]\rightarrow k[G]$ be the map given by the rule $h\mapsto fh$. Then $\phi(f)\in \mathrm{End}(k[G])$.
\end{lemma}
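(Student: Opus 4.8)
The plan is to show that the map $\phi(f)\colon k[G]\to k[G]$, $h\mapsto fh$, lies in $\mathrm{End}(k[G])$, which here is the $k$-algebra of $k$-linear self-maps of $k[G]$ that $\theta$ identifies with $\mathrm{Mat}_n(k)$. So it suffices to verify the two defining properties of $k$-linearity, namely additivity and compatibility with scalars, both of which I expect to fall out directly from the algebra structure of $k[G]$.

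First I would check additivity: for arbitrary $h_1,h_2\in k[G]$,
\[
\phi(f)(h_1+h_2)=f(h_1+h_2)=fh_1+fh_2=\phi(f)(h_1)+\phi(f)(h_2),
\]
which is just the distributive law in the ring $k[G]$. Next I would check homogeneity: for $\lambda\in k$ and $h\in k[G]$, the algebra axiom $\lambda(ab)=a(\lambda b)$ gives
\[
\phi(f)(\lambda h)=f(\lambda h)=\lambda(fh)=\lambda\,\phi(f)(h).
\]
These two identities together place $\phi(f)$ in $\mathrm{End}(k[G])$, matching the remark that the result follows immediately from the definitions.

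The one point I would be careful to state, since it is the only genuine subtlety, is which notion of $\mathrm{End}$ is in force. Left multiplication by a general $f$ is \emph{not} an algebra homomorphism: it sends the identity $1e_G$ to $f$ rather than to $1e_G$, and $\phi(f)(h_1h_2)=fh_1h_2$ need not equal $\phi(f)(h_1)\phi(f)(h_2)=(fh_1)(fh_2)$. For the lemma to be correct, $\mathrm{End}(k[G])$ must therefore be read as the endomorphism algebra of $k[G]$ as a $k$-vector space---made into a $k$-algebra by pointwise addition, composition, and scalar multiplication, and identified with $\mathrm{Mat}_n(k)$ via $\theta$---rather than as the set of algebra homomorphisms. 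Under this (intended) reading the linearity check above is the whole proof, and there is no remaining obstacle.
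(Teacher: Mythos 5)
Your proof is correct and takes essentially the same approach the paper intends: the paper gives no explicit argument, stating only that the lemma ``follows directly from the definitions,'' and your verification of additivity and $k$-homogeneity via distributivity and the axiom $\lambda(ab)=a(\lambda b)$ is exactly that routine check. Your closing caveat is also well taken and is in fact necessary for the lemma to be true: the paper literally defines $\mathrm{End}(k[G])$ as the set of algebra homomorphisms, under which reading the statement fails (left multiplication by $f$ sends $1$ to $f$, not to $1$), but the paper's subsequent treatment of $\mathrm{End}(k[G])$ --- pointwise addition making it a ring, and $\theta$ being an isomorphism onto $\mathrm{Mat}_n(k)$ --- confirms that $k$-linear endomorphisms is the intended meaning, just as you conclude.
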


We call the map $\phi(f)$ in the previous lemma the \textit{associated endomorphism to $f$}. 
\begin{proposition}
Let $\varphi: k[G]\rightarrow \mathrm{End}(k[G])$ be the map defined as $f\mapsto \phi(f)$. Then $\varphi$ is an injective algebra homomorphism. 
\end{proposition}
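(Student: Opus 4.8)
The plan is to verify directly that $\varphi$ satisfies the three defining properties of a $k$-algebra homomorphism and then to establish injectivity by evaluating the associated endomorphisms at the identity of $k[G]$. Lemma~\ref{lemma3} already guarantees that $\varphi$ is well defined as a map into $\mathrm{End}(k[G])$, since $\phi(f)\in\mathrm{End}(k[G])$ for every $f\in k[G]$; recall also from the excerpt that $\mathrm{End}(k[G])$ is a $k$-algebra under pointwise addition, composition, and pointwise scalar multiplication, with identity the identity map. It therefore remains to check $k$-linearity, multiplicativity, and preservation of the unit, and finally injectivity.

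For $k$-linearity I would fix $f,g\in k[G]$ and $\lambda\in k$ and compare the endomorphisms $\varphi(\lambda f+g)$ and $\lambda\varphi(f)+\varphi(g)$ by evaluating both on an arbitrary $h\in k[G]$. Using distributivity in $k[G]$ together with the algebra axiom $\lambda(fh)=(\lambda f)h$, both sides send $h$ to $\lambda(fh)+gh$, so the two endomorphisms coincide. For multiplicativity I would evaluate $\varphi(fg)$ and $\varphi(f)\varphi(g)$ at $h$, being careful that the product on the left is taken in $k[G]$ while the product on the right is composition in $\mathrm{End}(k[G])$. Associativity of multiplication in $k[G]$ then gives $\varphi(fg)(h)=(fg)h=f(gh)=\phi(f)\bigl(\phi(g)(h)\bigr)=\bigl(\varphi(f)\circ\varphi(g)\bigr)(h)$, as required. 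Finally, since $1_{k[G]}\cdot h=h$ for all $h$, the endomorphism $\varphi(1_{k[G]})$ is the identity map, which is the unit of $\mathrm{End}(k[G])$.

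For injectivity the key observation is that each associated endomorphism determines its defining element by evaluation at the multiplicative identity $1_{k[G]}=1e_G$: indeed $\varphi(f)(1_{k[G]})=f\cdot 1_{k[G]}=f$. Consequently, if $\varphi(f)=\varphi(g)$ as endomorphisms, then evaluating both at $1_{k[G]}$ yields $f=g$; equivalently, the kernel of $\varphi$ is trivial. This completes the argument.

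I do not expect a genuine obstacle, as every step reduces to the algebra axioms of $k[G]$. The only point requiring care is to keep the two notions of multiplication distinct—the internal product of $k[G]$ versus composition in $\mathrm{End}(k[G])$—so that the use of associativity in the multiplicativity step is transparent. The injectivity, which might otherwise call for an analysis of the kernel, is made immediate by the evaluation-at-identity trick, which in fact exhibits an explicit left inverse $\alpha\mapsto\alpha(1_{k[G]})$ of $\varphi$.
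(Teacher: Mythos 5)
Your proposal is correct and follows essentially the same route as the paper: the homomorphism properties come from associativity and distributivity in $k[G]$, and injectivity exploits the existence of the identity element (the paper phrases this as ``$\phi(f)$ is the zero map iff $f=0$,'' which is exactly your evaluation-at-$1_{k[G]}$ observation applied to the kernel). You simply spell out in detail the verifications the paper declares immediate, which is fine.
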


\begin{proof}
The fact that $\varphi$ is an algebra homomorphism follows immediately from the associativity and distributivity in $k[G]$. Since $k[G]$ has identity, $\phi(f)$ is the zero map if and only if $f=0\in K[G]$. Thus $\varphi$ is injective.
\end{proof}

In the following result we give a matrix representation of degree $n$ of $k[G]$. Its proof follows from the properites of the $k$-algebra homomorphism $\theta$ defined in \eqref{eq:theta}.

\medskip

\begin{theorem}\label{thm:rep}
Let $f\in k[G]$ and fix an order between the elements of $G$, say $\mathcal{B}=\{g_1,g_2,\ldots, g_n\}$. Then the map from $k[G]$ into $\mathrm{Mat}_n(k)$ given by $f\rightarrow M_{\mathcal{B}}(\phi(f))$ is a faithful matrix representation of degree $n$ of $k[G]$.
\end{theorem}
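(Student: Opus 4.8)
The plan is to read the stated map as the composition of the two algebra homomorphisms already produced in the excerpt, and then invoke the standard fact that a composition of injective algebra homomorphisms is again an injective algebra homomorphism. Concretely, the assignment $f \mapsto M_{\mathcal{B}}(\phi(f))$ is exactly $\theta \circ \varphi$, where $\varphi : k[G] \to \mathrm{End}(k[G])$ sends $f$ to its associated endomorphism $\phi(f)$, and $\theta : \mathrm{End}(k[G]) \to \mathrm{Mat}_n(k)$ sends an endomorphism to its transformation matrix with respect to the ordered basis $\mathcal{B}$. This is precisely the dashed arrow in the commutative triangle drawn above, so the first thing I would do is make this identification explicit, so that the two facts I want to cite are applied to the correct composite.

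Next I would argue that $\theta \circ \varphi$ is an algebra homomorphism. This is immediate from what has been set up: the preceding Proposition establishes that $\varphi$ is an algebra homomorphism, and properties (1)--(3) of $\theta$ recalled from \citep[Theorem 2.15]{Roman} show that $\theta$ preserves sums, scalar multiples, products, and the identity, hence is an algebra homomorphism (indeed an isomorphism). Since a composition of algebra homomorphisms again preserves all four structures, $\theta \circ \varphi$ is an algebra homomorphism into $\mathrm{Mat}_n(k)$, i.e. a matrix representation of degree $n = |G| = \dim_k k[G]$.

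Finally I would establish faithfulness, i.e. injectivity. The Proposition gives that $\varphi$ is injective, and $\theta$, being an isomorphism, is injective as well; a composition of injective maps is injective, so $\theta \circ \varphi$ is injective. By definition an injective matrix representation is faithful, which completes the proof.

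I do not expect any genuine obstacle here, since all the substance has been front-loaded: the injectivity of $\varphi$ is the content of the Proposition, and the fact that $\theta$ is an algebra isomorphism is the content of the recalled linear-algebra properties. The only point requiring a little care is the bookkeeping identification $f \mapsto M_{\mathcal{B}}(\phi(f)) = (\theta \circ \varphi)(f)$; once that is written down, the theorem follows by assembling the diagram rather than by any new computation.
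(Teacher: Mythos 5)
Your proposal is correct and follows exactly the route the paper intends: the paper's (very brief) proof likewise consists of observing that the map is the composite $\theta \circ \varphi$ of the dashed arrow in the diagram and invoking the properties of $\theta$ from \eqref{eq:theta} together with the preceding proposition on $\varphi$. You have simply spelled out the bookkeeping that the paper leaves implicit, which is a faithful expansion rather than a different argument.
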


We will from now on abbreviate $M_{\mathcal{B}}(\phi(f))$ as $M_{\mathcal{B}}(f)$. It is clear that $M_{\mathcal{B}}(f)$ depends on the chosen order of the elements of $G$. Changing the order of the elements of $G$ will simply result in $M_{\mathcal{B}}(f)$ being conjugated with a permutation matrix $P$, i.e., we get a matrix of the form $P^{-1}M_{\mathcal{B}}(f)P$. Therefore, up to conjugation with permutation matrices, the matrix $M_{\mathcal{B}}(f)$ is unique. This motivates the following definition.

\medskip

\begin{definition}
Let $f\in k[G]$ and fix an order between the elements of $G$, say $\mathcal{B}=\{g_1,g_2,\ldots, g_n\}$. The matrix $M_{\mathcal{B}}(f)$ is called the matrix representation of $f$ with respect to $\mathcal{B}$.
\end{definition}

We now give some examples.

\medskip

\begin{example}\label{exmp:abel}
Let $k$ be an arbitrary field and $G=\mathbb{Z}_2\times \mathbb{Z}_4$, which we identify with   $\langle x, y \mid x^2 = 1, y^4 = 1, xy = yx \rangle$. The elements of $G$ are precisely: 
\begin{align*}
g_1&=1, \;\; g_2=y, \;\; g_3=y^2, \;\; g_4=y^3,\\
g_5&=x, \;\; g_6=xy, \;\;g_7=xy^2, \;\; g_8=xy^3.
\end{align*}
Thus, an arbitrary element $f\in k[G]$ is a formal sum of the form
\begin{equation}\label{examp_1_element}
f=\sum_{i=1}^8 a_ig_i, \qquad a_i\in k.    
\end{equation}
Let $\mathcal{B}=\{g_1,\ldots, g_8\}$, so that $\mathcal{B}$ is an ordered basis for the group algebra $k[G]$. For a fixed $f\in k[G]$ written as in \eqref{examp_1_element}, the matrix representation of $f$ is
\[
M_{\mathcal{B}}(f)=
\begin{pmatrix}
a_1 & a_4 & a_3 & a_2 & a_5 & a_8 & a_7 & a_6 \\
a_2 & a_1 & a_4 & a_3 & a_6 & a_5 & a_8 & a_7 \\
a_3 & a_2 & a_1 & a_4 & a_7 & a_6 & a_5 & a_8 \\
a_4 & a_3 & a_2 & a_1 & a_8 & a_7 & a_6 & a_5 \\
a_5 & a_8 & a_7 & a_6 & a_1 & a_4 & a_3 & a_2 \\
a_6 & a_5 & a_8 & a_7 & a_2 & a_1 & a_4 & a_3 \\
a_7 & a_6 & a_5 & a_8 & a_3 & a_2 & a_1 & a_4 \\
a_8 & a_7 & a_6 & a_5 & a_4 & a_3 & a_2 & a_1 
\end{pmatrix}\in \mathrm{Mat}_8(k).
\]
\end{example}

Notice that the group algebra $k[G]$ in the above example is commutative. A matrix representation of a non-commutative finite-dimensional group algebra is given next.

\begin{example}
Let $k$ be an arbitrary field and $D_4$ the dihedral group of order 8, that is,
\[
D_4=\langle x^iy^j\; |\; x^4=y^2=(xy)^2=1 \rangle.
\]
Thus $D_4$ consists of the elements
 \begin{align*}
    g_1&=1, \;\;g_2=x, \;\; g_3=x^2, \;\; g_4 = x^3,\\ 
    g_5&=y, \;\;g_6=xy, \;\; g_7=x^2y, \;\; g_8= x^3y.
 \end{align*}
Considering the ordered basis $\mathcal{B}=\{g_1,\ldots, g_8\}$ of $k[D_4]$, an element $f\in k[D_4]$ can be written as $f=\big(a_1g_1+a_2g_2+a_3g_3+a_4g_4\big) + \big(b_1g_1+b_2g_2+b_3g_3+b_4g_4\big)g_5$. By doing so, the matrix representation of $f\in k[D_4]$ is
\[
 M_{\mathcal{B}}(f)=
\begin{pmatrix}
a_1 & a_4 & a_3 & a_2 & b_1 & b_2 & b_3 & b_4 \\  
a_2 & a_1 & a_4 & a_3 & b_2 & b_3 & b_4 & b_1 \\  
a_3 & a_2 & a_1 & a_4 & b_3 & b_4 & b_1 & b_2 \\  
a_4 & a_3 & a_2 & a_1 & b_4 & b_1 & b_2 & b_3 \\  
b_1 & b_2 & b_3 & b_4 & a_1 & a_4 & a_3 & a_2 \\  
b_2 & b_3 & b_4 & b_1 & a_2 & a_1 & a_4 & a_3 \\  
b_3 & b_4 & b_1 & b_2 & a_3 & a_2 & a_1 & a_4 \\  
b_4 & b_1 & b_2 & b_3 & a_4 & a_3 & a_2 & a_1 \\  
\end{pmatrix}\in \mathrm{Mat}_8(k).
\]
\end{example}

\section{Some properties of the Matrix Representation}

In this section, we will study some properties of the matrix representation of $k[G]$ given in Theorem \ref{thm:rep}. These properties concern invertibility and the Hamming weight of the rows and columns of $M_{\mathcal{B}}(f)$. Throughout this section we will continue using the notation introduced before and we consider the basis $\mathcal{B}=\{g_1,\ldots, g_n\}$, where $g_1, \ldots, g_n$ is a fixed order of the group $G$. Moreover, let $k[G]^*$ be the group of units of $k[G]$.

\subsection{Weight preserving property}

In this section we investigate the Hamming weight of the matrix representation of an element $f\in k[G]$. We show that hamming weight of any row and column of $M_{\mathcal{B}}(f)$ is the same. This property could be used in coding theory to construct parity-check matrices for MDPC or LDPC codes.

We have a natural notion of weight in $k[G]$. For a vector $v= \left( v_1, v_2, \ldots, v_n \right) \in k^n$, we write $\text{supp}(v)$ to denote the support of $v$, i.e.,
\[ \text{supp}(v) = \{ i \: | \: v_i \neq 0 \}. \]
Similarly, for $f= \sum_{i=1}^n \lambda_{g_i} g_i \in k[G]$, the support of $f$ is
\[ \text{supp}(f) = \{ i \: | \: \lambda_{g_i} \neq 0 \}. \]

\begin{definition}
Let $v \in k^n$ be a vector.  Then its (Hamming) weight is the number of non-zero entries of $v$, i.e.,
\[ \wt(v) = \lvert \text{supp}(v) \rvert.\]
Let $f \in k[G]$. Then its (Hamming) weight is defined as
\[ \wt(f) = \lvert \text{supp}(f) \rvert.\]
\end{definition}

It turns out that the row- and column weight of the matrix representation of an element $f \in k[G]$ is exactly the weight of $f$. 

\begin{proposition}\label{prop:rowcolweight}
Let $f =\sum_{i=1}^n \lambda_{g_i} g_i \in k[G]$ be an element and $M_\mathcal{B}(f)$ its matrix representation with respect to $\mathcal{B}$. Let $c_1, c_2, \ldots, c_n$ be the columns of $M_\mathcal{B}(f)$ and $r_1, r_2, \ldots, r_n$ be the rows of $M_\mathcal{B}$. Then, for all $j \in \{ 1, \ldots, n \}$, 
\[ \wt(f) = \wt(c_j) = \wt(r_j).\]
\end{proposition}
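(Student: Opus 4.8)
The plan is to compute the $(i,j)$-entry of $M_{\mathcal{B}}(f)$ explicitly and to observe that each row and each column of $M_{\mathcal{B}}(f)$ is nothing but a rearrangement of the coefficient vector $(\lambda_{g_1},\ldots,\lambda_{g_n})$ of $f$. Since the Hamming weight of a vector depends only on the multiset of its entries, all rows and columns will then automatically share the common weight $\wt(f)$.

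First I would identify the entries. By the definition of a transformation matrix, the $j$-th column $c_j$ of $M_{\mathcal{B}}(f)$ is the coordinate vector of $\phi(f)(g_j)=fg_j$ with respect to $\mathcal{B}$. Expanding the product gives
\[
fg_j = \sum_{i=1}^n \lambda_{g_i}\,(g_i g_j).
\]
Because right multiplication by $g_j$ is a bijection of $G$, the products $g_1g_j,\ldots,g_ng_j$ are precisely the elements of $G$ listed in some order, so no two of them coincide. Consequently the coefficient of $g_i$ in $fg_j$ is $\lambda_{g_i g_j^{-1}}$, that is, the $(i,j)$-entry of $M_{\mathcal{B}}(f)$ equals $\lambda_{g_i g_j^{-1}}$.

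With the entry formula in hand, the two weight equalities follow from the fact that translation in a group is bijective. For a fixed column index $j$, the assignment $g_i \mapsto g_i g_j^{-1}$ is a bijection of $G$ onto itself, so as $i$ ranges over $1,\ldots,n$ the subscripts $g_i g_j^{-1}$ run through all of $G$ exactly once; hence the entries of $c_j$ are the scalars $\lambda_{g_1},\ldots,\lambda_{g_n}$ up to reordering, and $\wt(c_j)=\wt(f)$. For a fixed row index $i$, I would instead let $j$ vary: since $g_j \mapsto g_j^{-1}$ is a bijection of $G$ and left multiplication by $g_i$ is as well, the composite $g_j \mapsto g_i g_j^{-1}$ is a bijection, so the $i$-th row $r_i$ is again a reordering of the coefficient vector and $\wt(r_i)=\wt(f)$.

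The argument is almost entirely bookkeeping; the one point requiring care, and the crux of the whole proof, is the entry formula $\lambda_{g_i g_j^{-1}}$, which rests on right translation being a bijection and on the group elements being distinct. Once that is established, the equalities of weights are immediate, because permuting the coordinates of a vector leaves the number of nonzero entries unchanged.
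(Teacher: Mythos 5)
Your proof is correct and takes essentially the same route as the paper: both rest on the entry formula $\bigl(M_{\mathcal{B}}(f)\bigr)_{ij} = \lambda_{g_i g_j^{-1}}$ and on the bijectivity of group translations (and inversion) to avoid collisions among coefficients. The only difference is presentational — you unify both cases by observing that every row and every column is a permutation of the coefficient vector, whereas the paper argues the column case via the coordinate vector of $fg_j$ and the row case via a separate support count; the mathematical content is identical.
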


\begin{proof}
Let $w=\wt(f)$ and let $\lambda_{\tilde{g}_1}, \lambda_{\tilde{g}_2}, \ldots, \lambda_{\tilde{g}_{w}}$ be the non-zero coefficients of $f$, i.e.
\[ f = \sum_{i=1}^w \lambda_{\tilde{g}_i} \tilde{g}_i. \]
Note first that $c_j$ is a vector representation of 
\[ \phi(f)(g_j) = \sum_{i=1}^w \lambda_{\tilde{g}_i} (\tilde{g}_i g_j), \]
hence $\wt(c_j) = \wt(f).$

Now, consider the row $r_j$ and let $k \in \{1, \ldots, n \}$. The $k$'th entry of $r_j$ is the coefficient of $g_j$ in
\[ \sum_{i=1}^w \lambda_{\tilde{g}_i} (\tilde{g}_i g_k) = \sum_{i=1}^n \lambda_{g_i g_k^{-1}} g_i. \]
Thus, the $k$'th entry of $r_j$ is non-zero if and only if $\lambda_{g_j g_k^{-1}} \neq 0 $, which happens if and only if $\tilde{g}_m g_k = g_j$ for some $m \in \{1, \ldots, w \}$. We see that this happens if and only if $g_k \in \{ \tilde{g}_1^{-1} g_j, \tilde{g}_2^{-1} g_j, \ldots, \tilde{g}_w^{-1} g_j \}$. Thus, we conclude that
\[\wt(r_j) = w  =\wt(f). \]
\end{proof}

\subsection{Invertibility}

Notice first that it follows immediately from Theorem \ref{thm:rep} that if $f\in k[G]^*$, then  $M_\mathcal{B}(f)$ is an invertible matrix. The converse also holds as it is shown in the next proposition.

\begin{proposition}\label{prop:invertibleboth}
Let $f \in k[G]$. The matrix $M_{\mathcal{B}}(f)$ is invertible if and only if $f \in k[G]^*$.
\end{proposition}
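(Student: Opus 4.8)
The plan is to prove both directions of the equivalence, but the forward direction (from invertibility of $f$ to invertibility of $M_{\mathcal{B}}(f)$) is already noted to follow immediately from Theorem~\ref{thm:rep}, since a faithful matrix representation of $k$-algebras sends units to units: if $fg = g f = 1$ in $k[G]$, then $M_{\mathcal{B}}(f)M_{\mathcal{B}}(g) = M_{\mathcal{B}}(fg) = M_{\mathcal{B}}(1) = I_n$ by the multiplicative property and the fact that $M_{\mathcal{B}}(1)$ is the identity. So the real content is the converse: assuming $M_{\mathcal{B}}(f)$ is invertible, I must produce an inverse for $f$ inside $k[G]$.

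The cleanest route is to pass through the endomorphism algebra rather than working with matrices directly, exploiting the chain of isomorphisms set up earlier. First I would recall that $\theta\colon \mathrm{End}(k[G]) \to \mathrm{Mat}_n(k)$ is an algebra \emph{isomorphism}, so $M_{\mathcal{B}}(f) = \theta(\phi(f))$ being invertible in $\mathrm{Mat}_n(k)$ is equivalent to $\phi(f)$ being invertible in the ring $\mathrm{End}(k[G])$; this equivalence was stated explicitly in the excerpt. Thus it suffices to show that $\phi(f)$ invertible as an endomorphism forces $f$ to be a unit in $k[G]$. This reduces the whole proposition to a statement purely about the map $\phi(f)\colon h \mapsto fh$.

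The key step is then the following observation. Since $k[G]$ is finite-dimensional over $k$, the endomorphism $\phi(f)$ is invertible if and only if it is injective, equivalently surjective. Assume $\phi(f)$ is surjective: then there exists $h \in k[G]$ with $\phi(f)(h) = fh = 1_{k[G]}$, so $f$ has a right inverse. To upgrade this to a genuine two-sided inverse, I would invoke the standard fact that in a finite-dimensional algebra a one-sided inverse is automatically two-sided: if $fh = 1$, then $\phi(h)$ is surjective (its image contains $\phi(h)(\text{anything} \cdot f)$, or more directly $\phi(f)$ injective combined with $fh=1$ forces $hf = 1$ by a short dimension-count argument). Concretely, from $fh = 1$ one checks that left multiplication by $h$ is injective hence bijective, yielding $h' $ with $hh' = 1$, and then $f = f(hh') = (fh)h' = h'$, so $hf = hh' = 1$. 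Hence $f \in k[G]^*$ with inverse $h$.

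The main obstacle is the passage from a one-sided inverse to a two-sided inverse; everything else is an application of the isomorphisms $\theta$ and $\varphi$ already established. I expect this step to be routine given finite-dimensionality, but it is the only place where genuine algebra (beyond transporting invertibility across isomorphisms) is needed, so I would state the dimension/injectivity argument carefully rather than asserting it. An alternative that sidesteps the one-sided issue entirely is to note that $\phi(f)$ invertible gives an inverse endomorphism $\psi$, evaluate $h := \psi(1_{k[G]})$, and verify directly that $fh = \phi(f)(\psi(1)) = 1$ and symmetrically $hf = 1$ using that $\psi$ commutes with right multiplication (since $\phi(f)$ does); checking that $\psi$ is itself of the form $\phi(h)$ is exactly where the faithfulness and the structure of $\mathrm{End}(k[G])$ as containing all left multiplications must be used, and this is the subtle point to get right.
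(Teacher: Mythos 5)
Your proposal is correct, and its first half coincides in substance with the paper's own proof: invertibility of $M_{\mathcal{B}}(f)$ means that left multiplication by $f$ is a surjective linear map, so $1_{k[G]}$ has a preimage $h$ with $fh = 1$. The paper realizes this concretely --- the columns of $M_{\mathcal{B}}(f)$ span $k^n$ and the $j$-th column represents $fg_j$, so a linear combination equal to $(1,0,\ldots,0)^T$ produces $h = \sum_i \mu_i g_i$ --- whereas you pass through the isomorphism $\theta$ of \eqref{eq:theta}; these are the same argument in different clothing. Where you genuinely differ is in upgrading the right inverse to a two-sided one. The paper returns to the matrix side: since a one-sided inverse of a square matrix is automatically two-sided, $M_{\mathcal{B}}(h)M_{\mathcal{B}}(f) = I_n$, and rerunning the column argument on $M_{\mathcal{B}}(h)$ gives $hf = 1$. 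You instead stay inside the algebra: $fh = 1$ makes $\phi(h)$ injective, hence bijective by finite-dimensionality, so $hh' = 1$ for some $h'$, and then $f = f(hh') = (fh)h' = h'$ forces $hf = 1$. Both upgrades are the same finite-dimensionality principle in disguise, but yours is self-contained and proves directly the general fact, recorded in the remark following the proposition, that one-sided units in $k[G]$ are two-sided; the paper's version is shorter because it delegates that work to a known fact about square matrices. One caveat if you write your version up within the paper's framework: the paper's definition of $\mathrm{End}(k[G])$ as the set of \emph{algebra} homomorphisms is a slip --- for $\theta$ to be an isomorphism onto $\mathrm{Mat}_n(k)$, and for $\phi(f)$ to lie in it at all, $\mathrm{End}(k[G])$ must be read as the algebra of $k$-linear endomorphisms, which is how your argument (and, implicitly, the paper) uses it.
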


\begin{proof}
Assume that $M_{\mathcal{B}}(f)$ is invertible. Without loss of generality, we may also assume that $g_1 = e_G$, the neutral element of $G$. Since $M_{\mathcal{B}}(f)$ is invertible, its columns $c_1, \ldots, c_n$ are linearly independent, implying that they span $k^n$ as $k$-vector space. In particular, there exists $\mu_1, \mu_2, \ldots, \mu_n \in k$ such that
\[ \sum_{i=1}^n \mu_i c_i = (1,0, \ldots, 0)^T.
\]
Now, for all $j \in \{ 1, \ldots, n \}$, $c_j = (c_{j,1}, c_{j,2}, \ldots, c_{j,n})^T$ is a vector representation of the element 
\[ \sum_{i=1}^n c_{j,i} g_i,\]
which, by definition of the matrix representation, equals $f g_j$. Thus, it follows that 
\[ 1 = \sum_{i=1}^n \mu_i fg_i = f \cdot \sum_{i=1}^n \mu_i g_i.
\]
Now, let $f^{-1} = \sum_{i=1}^n \mu_i g_i.$ Since matrix inverses are always both sided, we get that 
\[ M_\mathcal{B}(f^{-1}) = M_\mathcal{B}(f)^{-1}.
\]
In particular, 
\[ M_\mathcal{B}(f^{-1}) M_\mathcal{B}(f) = I_n.
\]
Applying the argumentation from before to the columns of $M_\mathcal{B}(f^{-1})$, we get that $f^{-1}f = 1$.
\end{proof}

\medskip

\begin{remark}
Note that this result implies that one-sided units in $k[G]$ are both-sided and that an element $f$ of $k[G]$ is either a unit or there exist $h, \tilde{h} \in k[G]$ such that $hf = 0 = f\tilde{h}$. That is, elements in $k[G]$ are either units or zero divisors.
\end{remark}

\medskip

\begin{proposition}
Let $k$ be any field, $G,H$ be groups, $\psi: G \rightarrow H$ be a group homomorphism and $\tilde{\psi}: k[G] \to k[H]$ be the induced homomorphism of $k$-algebras. Let $f \in k[G]$. If $f \in k[G]^*$, then $\tilde{\psi}(f) \in k[H]^*$. Moreover, if $\psi$ is injective, the converse holds as well.
\end{proposition}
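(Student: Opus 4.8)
The plan is to exploit the functoriality of the induced homomorphism together with the fact that $\tilde\psi$ preserves the algebra identity. First I would prove the forward direction directly. Suppose $f \in k[G]^*$, so there exists $g \in k[G]$ with $fg = gf = 1_{k[G]}$. Since $\tilde\psi$ is an algebra homomorphism, it is multiplicative and sends $1_{k[G]}$ to $1_{k[H]}$ (this is guaranteed because the neutral element $e_G$ maps to $e_H$ under the group homomorphism $\psi$, hence $\tilde\psi(1e_G) = 1e_H$). Applying $\tilde\psi$ to the relations $fg = 1 = gf$ yields
\[
\tilde\psi(f)\,\tilde\psi(g) = \tilde\psi(1_{k[G]}) = 1_{k[H]} = \tilde\psi(g)\,\tilde\psi(f),
\]
which exhibits $\tilde\psi(g)$ as a two-sided inverse of $\tilde\psi(f)$. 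Hence $\tilde\psi(f) \in k[H]^*$. This direction requires no assumption on $\psi$ beyond its being a group homomorphism.

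For the converse, assume $\psi$ is injective and $\tilde\psi(f) \in k[H]^*$. By Proposition \ref{grouphom}, the injectivity of $\psi$ implies that $\tilde\psi$ is an injective homomorphism of $k$-algebras, so $k[G]$ embeds as a subalgebra of $k[H]$ via $\tilde\psi$. The natural idea is to produce an inverse of $f$ inside $k[G]$. There is an inverse $w \in k[H]$ of $\tilde\psi(f)$, but a priori $w$ lies in the larger algebra $k[H]$ and need not be in the image of $\tilde\psi$. This is precisely where the main obstacle sits: one must show that the inverse of $\tilde\psi(f)$ actually lands in the subalgebra $\tilde\psi(k[G])$, so that it can be pulled back along the injection to give an inverse of $f$ in $k[G]$.

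To overcome this, I would reduce to the finite case and invoke the matrix-representation machinery of the previous subsections. Since invertibility only involves the finitely many group elements appearing in the supports of $f$ and the conjectured inverse, I can restrict attention to the finite subgroup $G_0 \leq G$ generated by $\mathrm{supp}(f)$ and its $\psi$-image $H_0 = \psi(G_0) \leq H$, so that $\tilde\psi$ restricts to an injection $k[G_0] \hookrightarrow k[H_0]$ of finite-dimensional group algebras and $\psi|_{G_0} \colon G_0 \to H_0$ is an isomorphism. By Proposition \ref{grouphom} this restricted $\tilde\psi$ is then a surjective (indeed bijective) algebra homomorphism, hence an isomorphism, so the inverse of $\tilde\psi(f)$ automatically lies in $\tilde\psi(k[G_0]) \subseteq \tilde\psi(k[G])$. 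Pulling it back through the isomorphism yields an element $g \in k[G_0] \subseteq k[G]$ with $fg = gf = 1$, establishing $f \in k[G]^*$.

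Alternatively, and perhaps more cleanly, I would bypass the support reduction by arguing as follows: because $\psi$ is injective, its image $\psi(G)$ is a subgroup of $H$ isomorphic to $G$, and $\tilde\psi$ is an isomorphism of $k[G]$ onto the subalgebra $k[\psi(G)]$ of $k[H]$. It therefore suffices to show that if $u \in k[\psi(G)]$ is invertible in the ambient algebra $k[H]$, then its inverse already lies in $k[\psi(G)]$. Here one may apply Proposition \ref{prop:invertibleboth}: working over a finite basis, the matrix representation $M_{\mathcal B}(u)$ built from the basis of $k[\psi(G)]$ is invertible precisely when $u$ is a unit of $k[\psi(G)]$, and its two-sided matrix inverse corresponds to an element of $k[\psi(G)]$ that must coincide with the ambient inverse by uniqueness of inverses in the ring $k[H]$. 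In either formulation, the crux is the same: translating invertibility in the large algebra into invertibility within the embedded copy of $k[G]$, and the finite-dimensional matrix representation is the tool that makes this precise.
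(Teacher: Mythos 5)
Your forward direction is exactly the paper's argument and is correct. The converse, however, has a genuine gap in both of your formulations, and it sits precisely at the obstacle you yourself identified: showing that the inverse of $\tilde\psi(f)$, which a priori lives in $k[H]$, actually lies in the embedded copy of $k[G]$. In your first formulation, restricting to $G_0 = \langle \mathrm{supp}(f)\rangle$ and $H_0 = \psi(G_0)$ does not make this ``automatic'': the isomorphism $k[G_0] \cong k[H_0]$ only matches units of $k[G_0]$ with units of $k[H_0]$, while your hypothesis is that $\tilde\psi(f)$ is invertible in the \emph{ambient} algebra $k[H]$; nothing in the restriction forces the ambient inverse to have support in $H_0$, so the word ``automatically'' is a non sequitur. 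In your second formulation, the appeal to Proposition \ref{prop:invertibleboth} is circular: that proposition converts invertibility of $u = \tilde\psi(f)$ in $k[\psi(G)]$ into invertibility of the matrix $M_{\mathcal{B}}(u)$, but you never establish either side; the phrase ``its two-sided matrix inverse corresponds to an element of $k[\psi(G)]$'' presupposes that $M_{\mathcal{B}}(u)$ is invertible, which is exactly what has to be shown.

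The missing step can be supplied, and doing so essentially reproduces the paper's proof. The paper argues by contraposition: if $f \notin k[G]^*$, then $\phi(f)$ (left multiplication by $f$) is a non-surjective endomorphism of the finite-dimensional space $k[G]$, hence has nontrivial kernel, so $fh = 0$ for some $h \neq 0$; applying the injective map $\tilde\psi$ gives $\tilde\psi(f)\,\tilde\psi(h) = 0$ with $\tilde\psi(h) \neq 0$, so $\tilde\psi(f)$ is a zero divisor in $k[H]$ and cannot be a unit. This sidesteps entirely the question of where the inverse lives. Alternatively, to complete your route directly: invertibility of $u$ in $k[H]$ makes left multiplication by $u$ injective on $k[H]$, hence injective on the $u$-invariant finite-dimensional subspace $k[\psi(G)]$, hence bijective there; \emph{this} is what shows $M_{\mathcal{B}}(u)$ is invertible, after which Proposition \ref{prop:invertibleboth} yields an inverse inside $k[\psi(G)]$, and uniqueness of inverses in $k[H]$ finishes the argument. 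Either way, the finite-dimensionality step (non-unit $\Rightarrow$ zero divisor, or injective $\Rightarrow$ bijective) is the ingredient your write-up omits.
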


\begin{proof}
Assume first that $f \in k[G]^*$. Then
\[ 1_{k[H]} = \tilde{\psi}(1_{k[G]}) = \tilde{\psi}(f f^{-1}) = \tilde{\psi}(f) \tilde{\psi}(f^{-1}). \]
We show the converse by contraposition. Assume that $\psi$ is injective, then so is $\tilde{\psi}$. Assume that $f \not \in k[G]^*$. Then there exists no $g \in k[G]$ such that $fg=1$, so the associated map $\phi(f)$ is not surjective. The map $\phi(f)$ is a linear endomorphism and $k[G]$ a finite-dimensional vector space, so $\phi(f)$ has non-trivial kernel. This implies that there exists a $h \in k[G] \setminus \{ 0 \}$ such that $fh = 0$. Then
\[ \tilde{\psi}(f) \tilde{\psi}(h) = \tilde{\psi}(fh) =\tilde{\psi}(0) = 0. \]
Since $\tilde{\psi}$ is injective, we get that $\tilde{\psi}(h) \neq 0$, implying that $\tilde{\psi}(f)$ is a zero-divisor and thus $\tilde{\psi}(f) \not \in k[H]^*$.
\end{proof}

When the field $k$ is $\mathbb{F}_2$, we can derive an elementary but restrictive condition for invertibility in $k[G]$.


\begin{lemma}
Let $f \in \mathbb{F}_2[G]^*$. Then, $f$ has odd weight.
\end{lemma}

\begin{proof}
Given two elements $f, \tilde{f} \in \mathbb{F}_2[G]$, we have that
\[ \wt(f + \tilde{f}) = \wt(f) + \wt(\tilde{f}) - 2 \lvert \text{supp}(f) \cap \text{supp}(\tilde{f}) \rvert. \]
This implies that linear combinations of elements of even weight have even weight, which implies the result.
\end{proof}

The following examples illustrates that this condition is not sufficient.

\begin{example}
Let $C_3$ be the cyclic group of order $3$. Then $\mathbb{F}_2[C_3] = \mathbb{F}_2[x]/(x^3 +1)$. The element $1+x+x^2$ is of odd weight, but not invertible. In fact, given a non-trivial group $G$ of odd cardinality, the element $\sum_{g \in G} g \in \mathbb{F}_2[G]$ is of odd weight and not invertible.
\end{example}

\section{Classification for Abelian Groups}

In this section we present a classification of the matrix representation of elements of finite-dimensional commutative group algebras. We start by recalling some standard results concerning tensor products of group algebras.

First, observe that the tensor product $A \otimes_k B$ of two $k$-algebras $A$ and $B$ is itself a $k$-algebra with the multiplication induced by $(a \otimes b) \cdot (\tilde{a} \otimes \tilde{b})= (a \tilde{a} \otimes b \tilde{b})$. Thus, given two groups $G$ and $H$, $k[G]\otimes_k k[H]$ is a $k$-algebra. Furthermore, $k[G \times H] \cong k[G] \otimes_k k[H]$ as $k$-algebras, where the map is induced by mapping a pair $(g,h) \in G \times H$ to the tensor $g \otimes h$ \citep{Milies_Sehgal}. We will show in Proposition \ref{tensormatrix} below that the matrix representation of an element $g \otimes h$ is the Kronecker product of the matrix representations of $g$ and $h$. Recall that the Kronecker product of an $m \times n$ matrix $A=(a_{ij})_{ij}$ over $k$ and a $p \times q$ matrix $B$ over $k$ is the $mp \times nq$ block matrix $A\otimes B$ given by
\[ A \otimes B = \left( \begin{array}{cccc}
a_{11} B & a_{21} B & \cdots &a_{n1} B \\
a_{21} B & a_{22} B & \cdots &a_{n2} B \\
\vdots & \vdots & \ddots & \vdots \\
a_{m1} B & a_{m2} B &\cdots & a_{mn} B 
\end{array}\right).
\]
Notice that the Kronecker product is associative, i.e. $(A \otimes B) \otimes C = A \otimes (B \otimes C)$, and that the application $\otimes$ is bilinear.

As before, let $g_1, \ldots, g_m$ be an ordering of the elements of $G$, $h_1, \ldots, h_n$ an ordering of the elements of $H$, $\mathcal{B}_G =\{ g_1, \ldots g_m \}$ and $\mathcal{B}_H = \{ h_1, \ldots h_n \}$ the corresponding $k$-bases of $k[G]$ and $k[H]$. We define the set 
\[ B_{G \otimes H} = \{ g_1 \otimes h_1, \ldots , g_1 \otimes h_n, g_2 \otimes h_1, \ldots, g_m \otimes h_n \},\]
which is an ordered basis of $k[G] \otimes_k k[H]$.

\begin{proposition}\label{tensormatrix}
Let $a \in k[G]$, $b \in k[H]$, $M_{\mathcal{B}_G}(a)$ be the matrix representation of $a$ with respect to $\mathcal{B}_G$ and $M_{\mathcal{B}_H}(b)$ be the matrix representation of $b$ respect to $\mathcal{B}_H$. Then 
\[ M_{B_{G \otimes H}} (a \otimes b) = M_{\mathcal{B}_G}(a) \otimes M_{\mathcal{B}_H}(b). \]
\end{proposition}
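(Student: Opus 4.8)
The plan is to reduce the claim to the standard linear-algebra fact that, with respect to a tensor-product basis ordered lexicographically, the matrix of a tensor product of two linear operators equals the Kronecker product of their individual matrices. The key bridge is that left multiplication by a simple tensor respects the tensor decomposition of $k[G] \otimes_k k[H]$.

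First I would establish the operator identity $\phi(a \otimes b) = \phi(a) \otimes \phi(b)$ on $k[G] \otimes_k k[H]$, where the right-hand side denotes the tensor product of the two $k$-linear endomorphisms. On a simple tensor $x \otimes y$, the defining multiplication rule $(a \otimes b)(x \otimes y) = (a x) \otimes (b y)$ gives $\phi(a \otimes b)(x \otimes y) = \phi(a)(x) \otimes \phi(b)(y) = \bigl(\phi(a) \otimes \phi(b)\bigr)(x \otimes y)$. Since simple tensors span $k[G] \otimes_k k[H]$ and both sides are $k$-linear, the two endomorphisms coincide.

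Next I would compute the matrix $M_{B_{G \otimes H}}(a \otimes b)$ column by column. Writing $a g_i = \sum_{p=1}^m \alpha_{p,i}\, g_p$ and $b h_j = \sum_{q=1}^n \beta_{q,j}\, h_q$, the scalars $\alpha_{p,i}$ are exactly the entries of $M_{\mathcal{B}_G}(a)$ and the $\beta_{q,j}$ those of $M_{\mathcal{B}_H}(b)$, since a column of a matrix representation is the coordinate vector of the image of a basis element under $\phi$. The column of $M_{B_{G \otimes H}}(a \otimes b)$ indexed by $g_i \otimes h_j$ is then the coordinate vector of
\[ (a \otimes b)(g_i \otimes h_j) = (a g_i) \otimes (b h_j) = \sum_{p=1}^m \sum_{q=1}^n \alpha_{p,i} \beta_{q,j}\, (g_p \otimes h_q), \]
so the entry in row $g_p \otimes h_q$ and column $g_i \otimes h_j$ equals $\alpha_{p,i} \beta_{q,j}$.

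Finally, I would compare this with the Kronecker product $M_{\mathcal{B}_G}(a) \otimes M_{\mathcal{B}_H}(b)$: under the chosen ordering of $B_{G \otimes H}$, in which the $G$-index is the slowly varying (block) index and the $H$-index the fast one, the $(g_p \otimes h_q,\, g_i \otimes h_j)$ entry of the Kronecker product is precisely the product $\alpha_{p,i} \beta_{q,j}$ of the corresponding entries, matching the computation above. I expect the only delicate point to be this last bookkeeping step: one must check that the ordering $g_1 \otimes h_1, \ldots, g_1 \otimes h_n, g_2 \otimes h_1, \ldots, g_m \otimes h_n$ is exactly the block ordering assumed by the Kronecker product convention, so that block row/column indices correspond to $G$ and within-block indices to $H$.
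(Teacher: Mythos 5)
Your proof is correct and follows essentially the same route as the paper's: both compute the image $(a\otimes b)(g_i \otimes h_j) = (ag_i)\otimes(bh_j)$, expand it in the basis $B_{G\otimes H}$, and match the resulting column (equivalently, its entries $\alpha_{p,i}\beta_{q,j}$) against the corresponding column of the Kronecker product under the lexicographic ordering. Your preliminary operator identity $\phi(a\otimes b) = \phi(a)\otimes\phi(b)$ and your explicit attention to the block-ordering convention are minor framing additions to the same argument.
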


\begin{proof}
Let $i \in \{ 1, \ldots, m \}$ and $j \in \{ 1, \ldots, n \}$ and assume that $\phi_a( g_i ) = \sum_{s=1}^m \lambda_{g_s} g_s$ and $\phi_b ( h_j ) = \sum_{t =1}^n \mu_{h_t} h_t$. Then we see that
\begin{align*}\phi_{a \otimes b}(g_i \otimes h_j) &= ag_i \otimes bh_j = \left(\sum_{s=1}^m \lambda_{g_s} g_s\right) \otimes \left(\sum_{t =1}^n \mu_{h_t} h_t\right) \\
&= \sum_{s =1}^m \sum_{t=1}^n \lambda_{g_s} \mu_{h_t} (g_s \otimes h_t).
\end{align*}
It follows that the $((i-1)n + j)$'th column of $M_{B_{G \otimes H}} (a \otimes b)$ is 
\[ v=(\lambda_{g_1} \mu_{h_1}, \ldots, \lambda_{g_1} \mu_{h_n}, \lambda_{g_2} \mu_{h_1}, \ldots, \lambda_{g_m} \mu_{h_n})^T.\]
We have that the $i$'th column of $M_a$ is given by
$\left( \lambda_{g_1}, \lambda_{g_2}, \ldots , \lambda_{g_n} \right)^T$ and the $j$'th column of $M_b$ is given by
$\left( \mu_{h_1}, \mu_{h_2}, \ldots , \mu_{h_n} \right)^T$, so the $((i-1)n + j)$'th column of $M_a \otimes M_b$ is also given by $v$.
\end{proof}

Let $\mathcal{B}_{G \times H} = \{ (g_1, h_1), \ldots , (g_1, h_n), (g_2, h_1), \ldots, (g_m, h_n) \}.$ It holds that $M_{\mathcal{B}_{G \times H}}(f) = M_{\mathcal{B}_{G \otimes H}}(\tilde{f})$, where $\tilde{f}$ is the image of $f$ under the identification $k[G \times H] \cong k[G] \otimes_k k[H]$. For notational purposes, we will identify elements of the group algebra of the product of groups with their image in the tensor product of the group algebras.

We will classify the matrix representations of elements of finite-dimensional commutative group algebras. It is well-known that a finite abelian group is isomorphic to the product of cyclic groups \citep[Theorem 6.16]{Roman}. 

Given a finite multiplicative cyclic group $C_n = \langle x \rangle$ of order $n$, the matrix representation of an element $f=a_0 + a_1 x + \ldots + a_{n-1}x^{n-1} \in k[C_n]$ with respect to the ordered basis $\{ 1, x , \ldots x^{n-1} \}$ is the circulant matrix 
\[ C(f) = \left( \begin{array}{cccc}
a_0 & a_{n-1} & \cdots & a_1 \\ 
a_1 & a_0 & \cdots & a_{n-1} \\
\vdots & \vdots & \ddots & \vdots \\
a_{n-1} & a_{n-2} & \cdots & a_0
\end{array} \right). \]
We write $P_{n,i} = C(x^{i-1})$ for $i = 1, \ldots, n$ (the index $n$ specifies the size of the matrix). We use the standard notation  $a\mid b$ for ``$a$ divides $b$''.

\begin{theorem}\label{thm:abelian}
Let $G$ be an abelian, non-cyclic group. Then there exist $n \in \mathbb{N}$ and $a_1 \mid a_2 \mid \ldots \mid a_n$ such that $\lvert G \rvert = \prod_{i=1}^n a_i$, and an ordered basis $\mathcal{B}$ such that for every $f \in k[G]$, we have that
\[ M_\mathcal{B}(f) = \sum_{i_2=1}^{a_2} \ldots \sum_{i_n = 1}^{a_n} C(f_{i_2, \ldots, i_n}) \otimes P_{a_2,i_2} \otimes \cdots \otimes P_{a_n,i_n},\]
where all $C(f_{i_2, \ldots, i_n})$ are circulant matrices of size $a_1 \times a_1$, such that the summands $C(f_{i_2, \ldots, i_n}) \otimes P_{a_2,i_2} \otimes \cdots \otimes P_{a_n,i_n}$ have pairwise disjoint support.
\end{theorem}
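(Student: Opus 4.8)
The plan is to reduce everything to the single-factor correspondence of Proposition~\ref{tensormatrix} by means of the invariant factor decomposition of $G$. Since $G$ is finite abelian and non-cyclic, the structure theorem \citep[Theorem 6.16]{Roman} furnishes an isomorphism $G \cong C_{a_1} \times C_{a_2} \times \cdots \times C_{a_n}$ with $a_1 \mid a_2 \mid \cdots \mid a_n$ and $\lvert G \rvert = \prod_{i=1}^n a_i$; here $n \geq 2$, for otherwise $G$ would be cyclic. Writing $C_{a_j} = \langle x_j \rangle$ and using the iterated algebra isomorphism $k[G] \cong k[C_{a_1}] \otimes_k \cdots \otimes_k k[C_{a_n}]$, I take $\mathcal{B}$ to be the tensor-product basis assembled from the standard bases $\{1, x_j, \ldots, x_j^{a_j - 1}\}$ of the factors, ordered lexicographically as in the discussion preceding the theorem, so that Proposition~\ref{tensormatrix} is directly applicable.

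The first step is to record the multi-factor form of Proposition~\ref{tensormatrix}: by associativity of the Kronecker product together with an immediate induction on $n$, one has $M_{\mathcal{B}}(c_1 \otimes \cdots \otimes c_n) = M_{\mathcal{B}_1}(c_1) \otimes \cdots \otimes M_{\mathcal{B}_n}(c_n)$ for every pure tensor. Next I expand an arbitrary $f \in k[G]$ along the last $n-1$ tensor factors. Since $\{x_2^{i_2-1} \otimes \cdots \otimes x_n^{i_n-1}\}$ is a basis of $k[C_{a_2}] \otimes_k \cdots \otimes_k k[C_{a_n}]$, there are unique elements $f_{i_2,\ldots,i_n} \in k[C_{a_1}]$ with
\[ f = \sum_{i_2=1}^{a_2} \cdots \sum_{i_n=1}^{a_n} f_{i_2,\ldots,i_n} \otimes x_2^{i_2-1} \otimes \cdots \otimes x_n^{i_n-1}. \]
Applying $k$-linearity of $M_{\mathcal{B}}$, the multi-factor Kronecker formula, the fact (recalled before the theorem) that the matrix representation of an element of $k[C_{a_1}]$ is its circulant matrix $C(\cdot)$, and the identities $M_{\mathcal{B}_j}(x_j^{i_j-1}) = C(x_j^{i_j-1}) = P_{a_j, i_j}$, then yields precisely the claimed expression, with each $C(f_{i_2,\ldots,i_n})$ an $a_1 \times a_1$ circulant matrix.

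What remains, and what I expect to be the main point requiring care, is the pairwise disjointness of the supports of the summands. The key observation is that each $P_{a_j, i_j} = C(x_j^{i_j-1})$ is a cyclic-shift permutation matrix, so for fixed $j$ the matrices $P_{a_j,1}, \ldots, P_{a_j,a_j}$ have pairwise disjoint supports. Using that the support of a Kronecker product $A_1 \otimes \cdots \otimes A_n$ equals, under the natural identification of the row and column index sets, the product $\text{supp}(A_1) \times \cdots \times \text{supp}(A_n)$, the support of the summand indexed by $(i_2,\ldots,i_n)$ lies in $\text{supp}(C(f_{i_2,\ldots,i_n})) \times \text{supp}(P_{a_2,i_2}) \times \cdots \times \text{supp}(P_{a_n,i_n})$. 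If two index tuples $(i_2,\ldots,i_n)$ and $(i_2',\ldots,i_n')$ differ, they differ in some coordinate $j$, where $\text{supp}(P_{a_j,i_j}) \cap \text{supp}(P_{a_j,i_j'}) = \emptyset$; hence the corresponding supports are disjoint irrespective of the (arbitrary) supports of the circulant factors $C(f_{i_2,\ldots,i_n})$. This establishes the disjoint-support claim and completes the proof.
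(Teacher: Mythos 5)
Your proof is correct, and while the derivation of the main formula follows the same essential path as the paper (decompose $G$ via the structure theorem, expand $f$ along the tensor factors, apply Proposition~\ref{tensormatrix}), you diverge from the paper in two ways. First, for the formula itself the paper runs an induction on the number $n$ of cyclic factors, with $n=2$ as base case; you instead prove a multi-factor Kronecker identity $M_{\mathcal{B}}(c_1 \otimes \cdots \otimes c_n) = M_{\mathcal{B}_1}(c_1) \otimes \cdots \otimes M_{\mathcal{B}_n}(c_n)$ once (by associativity of $\otimes$ and a trivial induction) and then obtain the result in one shot by expanding $f$ along the last $n-1$ factors and using linearity. These are essentially equivalent, though your packaging is arguably cleaner since the induction is isolated in a statement about pure tensors. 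Second, and more substantively, your disjoint-support argument is genuinely different from the paper's. The paper argues by contradiction: if two summands shared support, one could construct an element $h$ whose matrix row weight differs from $\wt(h)$, contradicting Proposition~\ref{prop:rowcolweight}; notably, the paper leaves the construction of this $h$ to the reader. You instead give a direct combinatorial proof: the circulant permutation matrices $P_{a_j,1}, \ldots, P_{a_j,a_j}$ have pairwise disjoint supports (a $1$ at $(r,s)$ forces $r-s \equiv i_j - 1 \pmod{a_j}$), the support of a Kronecker product is the product of the supports of its factors, and two product sets are disjoint as soon as they are disjoint in one coordinate. Your argument is self-contained, does not invoke the weight-preservation proposition at all, and makes transparent that disjointness is a property of the permutation-matrix factors alone, holding irrespective of the circulant blocks $C(f_{i_2,\ldots,i_n})$ --- in this respect it fills in a step the paper only sketches.
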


\begin{proof}
From the classification of abelian groups, we know that there exist $a_1 \mid a_2 \mid \ldots \mid a_n$ such that 
\[ G \cong C_{a_1} \times \cdots \times C_{a_n},\]
so we show the statement for $k[C_{a_1} \times \cdots \times C_{a_n}]$. We let $x_i$ be a generator of $C_{a_i}$ and consider the basis $\mathcal{B}_i = \{ 1, x_i, \ldots, x_i^{a_i-1} \}$ of $k[C_i]$.

We will argue why the summands have disjoint support at the end and first show the rest of the statement by induction on $n$. 

Consider $n=2$ and let $f \in k[C_{a_1} \times C_{a_2}]$. Write 
\[ f = \sum_{i=1}^{a_1} \sum_{j=1}^{a_2} \lambda_{i,j} (x_1^{i-1} \otimes x_2^{j-1}). \]
We rewrite 
\[ f = \sum_{j=1}^{a_2} \left( \left(\sum_{i=1}^{a_1} \lambda_{i,j} x_1^{i-1}\right) \otimes x_2^{j-1}\right).\]
For all $j \in \{1, \ldots, n \}$, let $f_j = \sum_{i=1}^{a_1} \lambda_{i,j} x_1^{i-1} \in k[C_{a_1}]$. Using Proposition \ref{tensormatrix}, we get that
\begin{align*} M_{C_{a_1} \otimes C_{a_2}}(f_j \otimes x_2^{j-1}) &= M_{\mathcal{B}_1}(f_j) \otimes M_{\mathcal{B}_2}(x_2^{j-1}) \\
&= C(f_j) \otimes P_{a_2, j},
\end{align*}
so it follows that
\[ M_{C_{a_1} \otimes C_{a_2}}(f \otimes x_2^{j-1}) = \sum_{j=1}^{a_2} C(f_j) \otimes P_{a_2, j}.\]
For $n > 2$, let $f \in k[C_{a_1} \times \cdots \times C_{a_n}]$ and write
\[ f = \sum_{i_1=1}^{a_1} \ldots \sum_{i_n=1}^{a_n} \lambda_{i_1, \ldots, i_n} (x_1^{i_1-1} \otimes \cdots \otimes x_n^{i_n-1}). \]

For all $i_n \in \{ 1, \ldots, a_n \}$, let
\[ f_{i_n} =  \sum_{i_1=1}^{a_1} \ldots \sum_{i_{n-1}=1}^{a_{n-1}} \lambda_{i_1, \ldots, i_n} (x_1^{i_1-1} \otimes \cdots \otimes x_{n-1}^{i_{n-1}-1}),
\]
which lies in $k[C_{a_1}] \otimes_k \cdots \otimes_k k[C_{a_{n-1}}]$.
We write $f$ as
\[ f = \sum_{i_n=1}^{a_n}  f_{i_n} \otimes x_n^{i_n-1}. \]
Fix $i_n$. By induction hypothesis we can write the matrix $M_{C_{a_1} \otimes \cdots \otimes C_{a_{n-1}}}(f_{i_n})$ as
\[  \sum_{i_2=1}^{a_2} \ldots \sum_{i_{n-1} = 1}^{a_{n-1}} C(f_{i_2, \ldots, i_n}) \otimes P_{a_2,i_2} \otimes \cdots \otimes P_{a_{n-1},i_{n-1}},\]
where $f_{i_2, \ldots, i_{n}} = \sum_{i_1=1}^{a_1} \lambda_{i_1, \ldots, i_n} x_1^{i_1-1}$. Now, applying Proposition \ref{tensormatrix} and rearranging the order of the summation symbols, we get that $M_{C_{a_1} \otimes \cdots \otimes C_{a_n}}(f)$ is given by
\[ \sum_{i_2=1}^{a_2} \ldots \sum_{i_n = 1}^{a_n} C(f_{i_2, \ldots, i_n}) \otimes P_{a_2,i_2} \otimes \cdots \otimes P_{a_n,i_n}. \]
Now, we argue as to why the matrix summands have disjoint support. If any of the summands have shared support, one can construct an element $h \in k[C_{a_1}] \otimes_k  \cdots \otimes_k k[C_{a_2}]$, such that the row weight of $M_{C_{a_1} \otimes \cdots \otimes C_{a_n}}(h)$ does not equal $\wt(h)$, which contradicts Proposition \ref{prop:rowcolweight}.
\end{proof}

\begin{example}
We reconsider the group algebra $k[\mathbb{Z}_2\times \mathbb{Z}_4]$ from Example \ref{exmp:abel}. As before, we identify $\mathbb{Z}_2\times \mathbb{Z}_4$ with $\langle x, y \mid x^2 = 1, y^4 = 1, xy = yx \rangle$ and consider the ordered basis $\mathcal{B}=\{g_1,\ldots, g_8\}$, where
\begin{align*}
g_1&=1, \;\; g_2=y, \;\; g_3=y^2, \;\; g_4=y^3\\
g_5&=x, \;\; g_6=xy, \;\;g_7=xy^2, \;\; g_8=xy^3.
\end{align*}
Write an element $f\in k[\mathbb{Z}_2\times \mathbb{Z}_4]$ as
\begin{equation}\label{examp_17_element}
f=\sum_{i=1}^8 a_ig_i, \qquad a_i\in k.    
\end{equation}
Then
\[
M_{\mathcal{B}}(f)=
\begin{pmatrix}
a_1 & a_4 & a_3 & a_2 & a_5 & a_8 & a_7 & a_6 \\
a_2 & a_1 & a_4 & a_3 & a_6 & a_5 & a_8 & a_7 \\
a_3 & a_2 & a_1 & a_4 & a_7 & a_6 & a_5 & a_8 \\
a_4 & a_3 & a_2 & a_1 & a_8 & a_7 & a_6 & a_5 \\
a_5 & a_8 & a_7 & a_6 & a_1 & a_4 & a_3 & a_2 \\
a_6 & a_5 & a_8 & a_7 & a_2 & a_1 & a_4 & a_3 \\
a_7 & a_6 & a_5 & a_8 & a_3 & a_2 & a_1 & a_4 \\
a_8 & a_7 & a_6 & a_5 & a_4 & a_3 & a_2 & a_1 
\end{pmatrix}\in \mathrm{Mat}_8(k).
\]
We may write the element $f$ as $(a_1 + a_5x) + (a_2 + a_6x)y + (a_3 + a_7x)y^2 + (a_4 + a_8x)y^3$, so we see that $M_\mathcal{B}(f)$ can be written as 
\begin{align*} \begin{pmatrix} a_1 & a_5   \\
a_5 & a_1   \end{pmatrix} \otimes \begin{pmatrix} 1 & 0 & 0 & 0\\
0 & 1& 0 & 0 \\
0 & 0 & 1 & 0 \\
0 & 0 & 0 & 1 \end{pmatrix} + \begin{pmatrix} a_2 & a_6   \\
a_6 & a_2   \end{pmatrix} \otimes \begin{pmatrix} 0 & 0 & 0 & 1\\
1 & 0 & 0 & 0 \\
0 & 1 & 0 & 0 \\
0 & 0 & 1 & 0 \end{pmatrix} \\
+ \begin{pmatrix} a_3 & a_7   \\
a_7 & a_3   \end{pmatrix} \otimes \begin{pmatrix} 0 & 0 & 1 & 0\\
0 & 0& 0 & 1 \\
1 & 0 & 0 & 0 \\
0 & 1 & 0 & 0 \end{pmatrix} + \begin{pmatrix} a_4 & a_8   \\
a_8 & a_4   \end{pmatrix} \otimes \begin{pmatrix} 0 & 1 & 0 & 0\\
0 & 0 & 1 & 0 \\
0 & 0 & 0 & 1 \\
1 & 0 & 0 & 0 \end{pmatrix}. \end{align*}
\end{example}

\section{Possible Applications}

\subsection{An MDPC code-based cryptosystem}
We will outline a possible MDPC code-based cryptosystem that is based on BIKE \cite[]{NISTBike}.

Fix a group $G$ of size $n$ and let $w \approx \frac{\sqrt{2n}}{2}$ and $t \approx \sqrt{2n}$ be odd. 
\begin{itemize}
    \item \textbf{Private Key:} Pick $h_1, h_2 \in \mathbb{F}_2[G]^*$, both of weight $w$.
    \item \textbf{Public Key:} Let $h=h_1^{-1} h_2$. Publish $(h, t)$.
    \item \textbf{Encryption:} Encode the message as $e_1, e_2 \in \mathbb{F}_2[G]^*$ such that $\wt(e_1) + \wt(e_2) = t$ and encrypt as $s=e_1 + he_2$.
    \item \textbf{Decryption:} Compute $h_1s = h_1 e_1 + h_2 e_2$. Since $h_1$ and $h_2$ are of moderate density, $e_1$ and $e_2$ can be recovered, e.g. with a bit-flipping algorithm \citep{gallager1962low}.
\end{itemize}

\begin{remark}
The whole process can be reformulated with matrices by replacing the elements $h_1, h_2$ with their matrix representation $M_\mathcal{B}(h_i)$ with respect to the basis $\mathcal{B}=\{ g_1, \ldots, g_n \}$, where $g_1, \ldots, g_n$ is an order of the group $G$, and $e_1, e_2$ with the first column of their matrix representation. By Proposition \ref{prop:rowcolweight}, the matrix $\begin{pmatrix} M_\mathcal{B}(h_1) \mid M_\mathcal{B}(h_2) \end{pmatrix}$ is a matrix of moderate density, allowing us to recover $e_1$ and $e_2$.
\end{remark}

\begin{remark}
If we let $n$ be a prime such that $2$ is primitive modulo $n$ and $G$ be the cyclic group of order $n$, this cryptosystem is exactly BIKE. 
\end{remark}

Possible advantages of working with group algebras over BIKE could be, with suitable choices for $G$, the weakened algebraic structure by, for example, working with a non-commutative group. Also note that some group algebras contain elements which can be represented with very little data, which could be used for very small public keys. For example, Kronecker products of circulant matrices can be built from two, in comparison, short vectors. So some elements of group algebras of abelian groups can be represented by very little data (see Theorem~\ref{thm:abelian}).

A possible difficulty is finding big groups whose multiplication can be efficiently implemented. Further, there might be issues with finding units in group algebras. Possible dangers are attacks with idempotents (especially in semi-simple group algebras). Further, if the group $G$ contains a non-trivial normal subgroup $N$, the map $\mathbb{F}_2[G] \to \mathbb{F}_2[G/N]$ might be usable for attacks.

\subsection{Construction of LDPC and MDPC block codes}

In the literature, several algebraic methods for constructing low density parity check (LPDC) and moderate density parity check (MDPC) codes have been investigated. Among these, perhaps one of the most relevant is the construction of the $[155,64,20]$ quasi-cyclic (QC) LDPC code designed by Tanner \citep{Tanner_Code}. The key idea in his construction is the use of the structure of the multiplicative group of $\mathbb{F}_p$, where $p$ is a prime, to place circulant matrices within a parity check matrix. To be more specific, let $a,b$ be two nonzero elements in $\mathbb{F}_p$ with orders $o(a)=k$ and $o(b)=j$. Then a $j\times k$ matrix $P$ of elements from $\mathbb{F}_p$ is constructed as follows:
\[
P=
\begin{pmatrix}
1 & a & a^2 & \cdots & a^{k-1}\\
b & ab & a^2b & \cdots & a^{k-1}b\\
\vdots & \vdots &\vdots &  \cdots & \vdots\\
b^{j-1} & ab^{j-1} & a^2b^{j-1} & \cdots & a^{k-1}b^{j-1}
\end{pmatrix}.
\]
The LDPC code is specified by the parity-check matrix $H$:
\[
\arraycolsep=8pt
\def\arraystretch{1.3}
H=\begin{pmatrix}
\mathbb{I}_1 & \mathbb{I}_{a} & \mathbb{I}_{a^2} & \cdots & \mathbb{I}_{a^{k-1}}\\
\mathbb{I}_b & \mathbb{I}_{ab} & \mathbb{I}_{a^2b} & \cdots & \mathbb{I}_{a^{k-1}b}\\
\vdots & \vdots &\vdots &  \cdots & \vdots\\
\mathbb{I}_{b^{j-1}} & \mathbb{I}_{ab^{j-1}} & \mathbb{I}_{a^2b^{j-1}} & \cdots & \mathbb{I}_{a^{k-1}b^{j-1}}\\
\end{pmatrix},
\]
where $\mathbb{I}_{i}$ is a $p\times p$ identity matrix with rows cyclically shifted to the left by $i$ positions.

In the case of the $[155,64,20]$ QC-LDPC designed by Tanner, the values $p=31$, $a=2$ and $b=5$ were chosen. Thus $o(a)=5$, $o(b)=3$ and the parity-check matrix is given by

\begin{equation}\label{eq:Parity_Tanner}
\arraycolsep=10pt \def\arraystretch{1.3}
H=\begin{pmatrix}
\mathbb{I}_1 & \mathbb{I}_2 & \mathbb{I}_4 & \mathbb{I}_8 & \mathbb{I}_{16}\\
\mathbb{I}_5 & \mathbb{I}_{10} & \mathbb{I}_{20} & \mathbb{I}_{9} & \mathbb{I}_{18}\\
\mathbb{I}_{25} & \mathbb{I}_{19} & \mathbb{I}_{7} & \mathbb{I}_{14} & \mathbb{I}_{28}
\end{pmatrix}
\end{equation}

The matrices $\mathbb{I}_i$ appearing in the parity check matrix $H$ in \eqref{eq:Parity_Tanner} are the $31\times 31$ identity matrix with rows cyclically shifted to the left by $i$ positions. Therefore, the matrices $\mathbb{I}_i$ in \eqref{eq:Parity_Tanner} are examples of circulant matrices, which we can identify with the group algebra $\mathbb{F}_2[C_{31}]$ where $C_{31}=\langle x\rangle$ is the cyclic multiplicative group of order $p=31$. Under this representation, the matrix $H$ may be written as follows:
\[
\arraycolsep=10pt
\def\arraystretch{1.3}
H=\begin{pmatrix}
x^1 & x^2 & x^4 & x^8 & x^{16}\\
x^5 & x^{10} & x^{20} & x^{9} & x^{18}\\
x^{25} & x^{19} & x^{7} & x^{14} & x^{28}
\end{pmatrix}.
\]
It is straightforward to verify that the matrix representation of $x^i\in \mathbb{F}_2[C_{31}]$ with respect to the ordered basis $\mathcal{B}=\{1,x,\ldots, x^{30}\}$ of $\mathbb{F}_2[C_{31}]$ is precisely $M_{\mathcal{B}}(x^i)=\mathbb{I}_i$. This remark brings to the light a natural generalization for the construction proposed by Tanner.

Let $G$ be a finite group (not necessarily abelian) and $k$ a finite field. Let $a,b\in k[G]$ be units having order $o(a)=k$ and $o(b)=j$, respectively. Then we construct a block linear code with parity check matrix
\[
\begin{pmatrix}
M_{\mathcal{B}}(a) & M_{\mathcal{B}}(a^2) & M_{\mathcal{B}}(a^3) & \cdots & M_{\mathcal{B}}(a^{k-1})\\
M_{\mathcal{B}}(ab) & M_{\mathcal{B}}(a^2b) & M_{\mathcal{B}}(a^3b) & \cdots & M_{\mathcal{B}}(a^{k-1}b)\\
\vdots & \vdots &\vdots &  \cdots & \vdots\\
M_{\mathcal{B}}(ab^{j-1}) & M_{\mathcal{B}}(a^2b^{j-1}) & M_{\mathcal{B}}(a^3b^{j-1}) & \cdots & M_{\mathcal{B}}(a^{k-1}b^{j-1})\\
\end{pmatrix}.
\]
Notice that the weight preserving property of the matrix representation allow us to determine the weight of the rows (and so columns) of the matrices $M_{\mathcal{B}}(a^ib^j)$ from the weight of the elements $a^jb^j\in k[G]$. Thus, choosing elements of low weight will imply in the construction of a parity-check matrix of an LDPC code. Similarly, choosing elements of moderate weight could yield a parity-check matrix of an MDPC code.

This code construction is not limited to abelian groups, but when the group $G$ is abelian, the representation of the matrices studied here may help to efficiently store $H$ in memory.


\bibliography{bibliography} 

\begin{thebibliography}{15}
\providecommand{\natexlab}[1]{#1}
\providecommand{\url}[1]{\texttt{#1}}
\expandafter\ifx\csname urlstyle\endcsname\relax
  \providecommand{\doi}[1]{doi: #1}\else
  \providecommand{\doi}{doi: \begingroup \urlstyle{rm}\Url}\fi

\bibitem[Aguilar~Melchor et~al.(2020)Aguilar~Melchor, Aragon, Bettaieb, Bidoux,
  Blazy, Bos, Deneuville, Dion, Gaborit, Lacan, Persichetti, Robert, Véron,
  and Zémor]{NISTHQC}
C.~Aguilar~Melchor, N.~Aragon, S.~Bettaieb, L.~Bidoux, O.~Blazy, J.~Bos, J.-C.
  Deneuville, A.~Dion, P.~Gaborit, J.~Lacan, E.~Persichetti, J.-M. Robert,
  P.~Véron, and G.~Zémor.
\newblock Hamming {Q}uasi-{C}yclic ({HQC}).
\newblock \emph{NIST PQC Call for Proposals}, 2020.
\newblock Round 3 Submission.

\bibitem[Aragon et~al.(2020)Aragon, Barreto, Bettaieb, Bidoux, Blazy,
  Deneuville, Gaborit, Ghosh, Gueron, Güneysu, Melchor, Misoczki, Persichetti,
  Sendrier, Tillich, Vasseur, and Zémor]{NISTBike}
N.~Aragon, P.~S. Barreto, S.~Bettaieb, L.~Bidoux, O.~Blazy, J.-C. Deneuville,
  P.~Gaborit, S.~Ghosh, S.~Gueron, T.~Güneysu, C.~A. Melchor, R.~Misoczki,
  E.~Persichetti, N.~Sendrier, J.-P. Tillich, V.~Vasseur, and G.~Zémor.
\newblock {BIKE}: {B}it {F}lipping {K}ey {E}ncapsulation.
\newblock \emph{NIST PQC Call for Proposals}, 2020.
\newblock Round 3 Submission.

\bibitem[Baldi et~al.(2007)Baldi, Chiaraluce, Garello, and
  Mininni]{baldi2007quasi}
M.~Baldi, F.~Chiaraluce, R.~Garello, and F.~Mininni.
\newblock Quasi-cyclic low-density parity-check codes in the mceliece
  cryptosystem.
\newblock In \emph{2007 IEEE International Conference on Communications}, pages
  951--956. IEEE, 2007.

\bibitem[Berger et~al.(2009)Berger, Cayrel, Gaborit, and Otmani]{berger2009}
T.~P. Berger, P.-L. Cayrel, P.~Gaborit, and A.~Otmani.
\newblock Reducing key length of the mceliece cryptosystem.
\newblock In \emph{International Conference on Cryptology in Africa}, pages
  77--97. Springer, 2009.

\bibitem[Berman(1967)]{berman}
S.~Berman.
\newblock On the theory of group codes.
\newblock \emph{Cybernetics}, 3\penalty0 (1):\penalty0 25--31, 1967.

\bibitem[Gallager(1962)]{gallager1962low}
R.~Gallager.
\newblock Low-density parity-check codes.
\newblock \emph{IRE Transactions on information theory}, 8\penalty0
  (1):\penalty0 21--28, 1962.

\bibitem[Huffman et~al.(2021)Huffman, Kim, and
  Sol{\'e}]{Concise_Ency_Coding_2021}
W.~Huffman, J.~Kim, and P.~Sol{\'e}.
\newblock \emph{Concise Encyclopedia of Coding Theory}.
\newblock CRC Press, 2021.
\newblock ISBN 9781351375108.

\bibitem[Jespers and del R\'io(2015{\natexlab{a}})]{Group_Ring_Groups_1}
E.~Jespers and A.~del R\'io.
\newblock \emph{Group ring Groups}, volume~1.
\newblock De Gruyter, 2015{\natexlab{a}}.

\bibitem[Jespers and del R\'io(2015{\natexlab{b}})]{Group_Ring_Groups_2}
E.~Jespers and A.~del R\'io.
\newblock \emph{Group ring Groups}, volume~2.
\newblock De Gruyter, 2015{\natexlab{b}}.

\bibitem[MacWilliams(1970)]{groupalgebrawilliams}
F.~J. MacWilliams.
\newblock Binary codes which are ideals in the group algebra of an abelian
  group.
\newblock \emph{The Bell System Technical Journal}, 49\penalty0 (6):\penalty0
  987--1011, 1970.
\newblock \doi{10.1002/j.1538-7305.1970.tb01812.x}.

\bibitem[Milies and Sehgal(2002)]{Milies_Sehgal}
C.~Milies and S.~Sehgal.
\newblock \emph{An Introduction to Group Rings}.
\newblock Algebra and Applications. Springer Netherlands, 2002.

\bibitem[Roman(2007)]{Roman}
S.~Roman.
\newblock \emph{Advanced Linear Algebra}.
\newblock Graduate Texts in Mathematics. Springer, 3rd edition, 2007.

\bibitem[Santini et~al.(2021)Santini, Persichetti, and
  Baldi]{santini2021reproducible}
P.~Santini, E.~Persichetti, and M.~Baldi.
\newblock Reproducible families of codes and cryptographic applications.
\newblock \emph{Journal of Mathematical Cryptology}, 16\penalty0 (1):\penalty0
  20--48, 2021.

\bibitem[Tanner et~al.(2004)Tanner, Sridhara, Sridharan, Fuja, and
  Costello]{Tanner_Code}
R.~Tanner, D.~Sridhara, A.~Sridharan, T.~Fuja, and D.~Costello.
\newblock Ldpc block and convolutional codes based on circulant matrices.
\newblock \emph{IEEE Transactions on Information Theory}, 50\penalty0
  (12):\penalty0 2966--2984, 2004.

\bibitem[Weger et~al.(2022)Weger, Gassner, and Rosenthal]{weger2022survey}
V.~Weger, N.~Gassner, and J.~Rosenthal.
\newblock A survey on code-based cryptography.
\newblock \emph{arXiv preprint arXiv:2201.07119}, 2022.

\end{thebibliography}

\end{document}